\documentclass[sigconf]{acmart}

\AtBeginDocument{%
  }

\setcopyright{none}
\renewcommand\footnotetextcopyrightpermission[1]{}
\copyrightyear{2026}
\acmYear{2026}
\acmConference[MobiHoc '26]{The 27th International Symposium on Theory,
  Algorithmic Foundations, and Protocol Design for Mobile Networks and
  Mobile Computing}{November 23--26, 2026}{Tokyo, Japan}
\acmISBN{978-1-4503-XXXX-X/2026/11}

\usepackage{amsmath,amssymb,bm}
\usepackage{booktabs}

\AtEndPreamble{%
  \theoremstyle{acmdefinition}%
  \newtheorem{assumption}[theorem]{Assumption}%
  \newtheorem{remark}[theorem]{Remark}%
  \theoremstyle{acmplain}%
}

\begin{document}

\title{Resource-Constrained Semantic-Aware Remote Estimation with Overlapping Sensor Coverage}

\author{Bowen Sun}
\affiliation{%
\institution{Link\"oping University}
\city{Link\"oping}
\country{Sweden}}
\email{bowen.sun@liu.se}

\author{Nikolaos Pappas}
\affiliation{%
\institution{Link\"oping University}
\city{Link\"oping}
\country{Sweden}}
\email{nikolaos.pappas@liu.se}

\renewcommand{\shortauthors}{Sun et al.}

\begin{abstract}

We study semantic-aware remote estimation of multiple finite-state Markov sources observed by K sensors with overlapping coverage. The sensors share a time-division multiple-access uplink and differ in transmission reliability, delivery delay, and transmission budget. In each slot, the scheduler jointly selects a source and one of its monitoring sensors, or remains idle, to minimize the long-run average cost of actuation error subject to global and per-sensor transmission-frequency constraints. We formulate this problem as a finite average-cost constrained Markov decision process. We show that the transmission resource functions have rank at most K, although the global constraint may still restrict the feasible region. Consequently, the Lagrangian depends only on K effective transmission costs, and an optimal constrained solution can be represented using at most K+1 deterministic policy–recurrent-class components. We further characterize the piecewise-affine concave Lagrangian value function and derive projected dual subgradient ascent over an explicitly bounded multiplier set. Numerical results illustrate the value-function structure, the need for policy randomization in a representative instance, and the interaction between global and per-sensor transmission budgets.
\end{abstract}

\keywords{Semantic communication, constrained Markov decision processes, sensor scheduling, Lagrangian duality}

\maketitle

\section{Introduction}

Remote estimation is a key function in networked control systems, cyber-physical systems, and autonomous platforms. In these applications, sensors transmit measurements to a remote estimator or controller over communication links with limited bandwidth, nonzero delay, packet losses, and energy constraints. Continuous reporting may be infeasible or inefficient, requiring a scheduler to determine which information should be transmitted and when.


The age of information (AoI)~\cite{Kaul2012AoI} measures the freshness of the information available at the receiver, while estimation-distortion metrics quantify the discrepancy between the source state and its reconstruction~\cite{Sun2025AoIMultiprocess}. Conventional forms of these metrics, however, may not distinguish the operational consequences of different state mismatches. In a control system, confusing two safe operating modes may have little effect, whereas estimating a hazardous state as safe can be costly even when the estimate is recent. Semantic-aware communication addresses this by assigning value to information according to its relevance to the underlying task~\cite{Kountouris2021SemEmpComm,Pappas2021GoalOriented,Salimnejad2024RemoteAct}.

The cost of actuation error (CAE) scores a state mismatch by the cost it induces at the actuator~\cite{Pappas2021GoalOriented}. Luo and Pappas~\cite{LuoPappas2024SemAware} studied CAE scheduling for multiple Markov sources sharing a single lossy transmitter under a transmission-frequency constraint. In that setting, each scheduling decision selects only the source to be updated. Many sensor networks have a more general coverage structure: a source may be observed by several sensors, and an individual sensor may observe several sources~\cite{Kalor2023MultiSensor}. The sensors covering the same source may differ in transmission reliability, delivery delay, and available communication resources, and these attributes need not be aligned. The link with the higher success probability may also be the one with the longer delay~\cite{Pan2021HeteroChannels}, so link quality cannot be summarized by a single scalar. A scheduling decision must therefore specify the source to update and the sensor through which the update is sent, which makes each slot an assignment over source--sensor pairs rather than a choice of source~\cite{Cosandal2025SensorSelect}.

Overlapping coverage creates a coupling between semantic importance and communication quality. The preferred sensor for a source depends not only on link reliability and delay, but also on the urgency and potential actuation cost of the source's current state. Moreover, a shared time-division multiple-access (TDMA) uplink couples all source--sensor pairs, while sensor-specific transmission budgets introduce additional long-run resource constraints. This coupling is a property of the access scheme: under multi-packet reception several sensors can be decoded in the same slot, and the corresponding CAE analysis takes a different form~\cite{Elessawy2026MPR}. Thus, source selection and sensor selection must be optimized jointly.

We study semantic-aware scheduling of multiple finite-state Markov sources observed by $K$ sensors with overlapping coverage. In each slot, a central scheduler selects at most one source--sensor pair for transmission, or leaves the channel idle. The objective is to minimize the long-run average CAE subject to a global transmission-frequency constraint and a transmission-frequency constraint per sensor. We formulate the problem as a finite average-cost constrained Markov decision process (CMDP) and characterize the structure of its primal and dual solutions. The structural results provide a reference for the development and evaluation of scalable approximations. Our main contributions are as follows.
\begin{itemize}
\item We formulate joint source--sensor scheduling under overlapping coverage as an average-cost CMDP. The model captures a shared TDMA uplink, sensor-dependent transmission reliability and delivery delay, a global transmission-frequency budget, and individual sensor budgets (Section~\ref{sec:system}).

\item We show that the $K{+}1$ constraints are linearly dependent, because at most one sensor transmits per slot. Three consequences follow: the Lagrangian cost depends on the multipliers only through the $K$ effective transmission costs $\mu_k=\lambda_0+\lambda_k$; the dual is exactly affine along $(1,-1,\dots,-1)$, so every dual maximizer lies on the boundary of $\mathbb{R}_+^{K+1}$; and the number of deterministic policies needed at the optimum drops from $K{+}2$ to $K{+}1$ (Proposition~\ref{prop:effective-price}).


\item Using occupation measures, we prove that the constrained optimum is attained and is independent of the initial state. We further show that an optimal solution can be represented by a mixture of at most $K{+}1$ deterministic policy--recurrent-class components (Theorem~\ref{thm:mixture}). The analysis explicitly accommodates multichain policies, which may arise when a deterministic scheduler leaves some sources unupdated indefinitely.

\item We prove that the inner Lagrangian value is piecewise-linear and concave in the multiplier vector, with supergradients equal to the transmission-frequency vectors of the inner-optimal policies (Theorem~\ref{thm:pwlc}). With an explicit multiplier bound from a strictly feasible policy (Lemma~\ref{lem:lambda-bound}), projected dual subgradient ascent then applies with an $O(1/\sqrt{N})$ guarantee (Proposition~\ref{prop:dsg}).

\item We evaluate instances small enough for exact occupation-measure linear programming (Section~\ref{sec:numerical}), and confirm the piecewise-linear value with matching frequency plateaus, a budget at which randomization is required, and the predicted exchange between the global and per-sensor multipliers.
\end{itemize}

\subsection{Related work}
\emph{Freshness and mismatch metrics.} AoI scheduling has been studied through throughput-optimal and index policies for broadcast networks~\cite{Kadota2018AoIScheduling} and Whittle indices for AoI and general age penalties~\cite{Maatouk2021WhittleAoI,Tripathi2024WhittleAoI}. Age of incorrect information penalizes how long the receiver's estimate stays wrong~\cite{Maatouk2020AoII,Maatouk2023AoII}, with Whittle-type policies for Markov sources in~\cite{Kriouile2021AoIIMarkov} and extensions to channels with random delay in~\cite{Chen2024AoIIDelay}. CAE is more specific than both: its asymmetric cost matrix records which mismatch occurred~\cite{Pappas2021GoalOriented,LuoPappas2024SemAware,Luo2026DataSignif}. Multi-process sensor scheduling has also been studied for linear systems under covariance-based error~\cite{Han2017SensorScheduling,Sun2025AoIMultiprocess}, and the closest multi-channel models are~\cite{Ornee2023WhittleGM,Zhou2025AgeMultiChannel}. We differ from these works in the model: finite-state Markov sources, asymmetric semantic costs, an arbitrary bipartite sensor--source coverage graph, and both a global and a per-sensor budget.

\noindent\emph{Constrained MDPs and duality.} Our analysis uses occupation measures for constrained MDPs~\cite{Altman1999CMDP,HordijkKallenberg1984}, which underpin both classical randomization results~\cite{Ross1989MultiConstraint} and recent first-order solvers~\cite{Grontas2026OperatorSplitting}. Constrained scheduling for multi-source status updating has likewise been posed as a CMDP~\cite{Zakeri2024AoICMDP}. The single-transmitter CAE problem of~\cite{LuoPappas2024SemAware} has one constraint and hence a scalar multiplier, for which the Lagrangian value is a one-dimensional piecewise-linear concave curve and a line search suffices. With $K$ sensors the multiplier is a vector and the curve becomes a polyhedral surface, so a search along single coordinates is no longer sufficient. The dependence between the global and the per-sensor constraints has no counterpart in the single-transmitter model, and it is what keeps the effective dimension at $K$.

\noindent\emph{Scalable and online scheduling.} Restless bandit relaxations~\cite{Whittle1988RestlessBandits}, weakly coupled dynamic programs~\cite{Adelman2008Weakly}, multi-action indices~\cite{Hodge2015MultiAction}, and Lyapunov drift-plus-penalty control~\cite{Neely2010Stochastic} are the standard routes to large instances. All of them approximate the CMDP studied here, and the results of Section~\ref{sec:structure} give the exact reference against which such approximations must be measured. Developing them for overlapping coverage is left to future work.

\noindent\emph{Organization.} Section~\ref{sec:system} gives the model and the CMDP formulation, Section~\ref{sec:structure} the structural results, and Section~\ref{sec:numerical} the numerical results. Section~\ref{sec:conclusion} concludes. Proof sketches are given inline; full proofs are in the extended version.

\section{System Model and Problem Formulation}\label{sec:system}

\subsection{Network and sensing graph}

As shown in Fig.~\ref{fig:system}, we consider a system which operates in discrete time, with slots indexed by $t\in\mathbb{N}_0$. Let $\mathcal{M}=\{1,\ldots,M\}$ and $\mathcal{K}=\{1,\ldots,K\}$ denote the sets of sources and sensors, respectively. Each sensor $k\in\mathcal{K}$ observes a nonempty subset $\mathcal{M}_k\subseteq\mathcal{M}$. The coverage sets may overlap, and we assume that $\bigcup_{k\in\mathcal{K}}\mathcal{M}_k=\mathcal{M}$, so that every source is observed by at least one sensor. For each source $m\in\mathcal{M}$, define $\mathcal{K}_m\triangleq\{k\in\mathcal{K}:m\in\mathcal{M}_k\}$ as the set of sensors that can observe source $m$. The sensing relation is represented by the bipartite graph $\mathcal{G}=(\mathcal{K},\mathcal{M},\mathcal{E})$, whose edge set is
\begin{equation}\label{eq:sensing-graph}
\mathcal{E}\triangleq
\{(k,m)\in\mathcal{K}\times\mathcal{M}:m\in\mathcal{M}_k\}.
\end{equation}
When $|\mathcal{K}_m|>1$, source $m$ can be monitored and updated through multiple sensors, which may differ in their link reliabilities and remaining transmission budgets.

\begin{figure}[htp]
\centering
\includegraphics[width=0.8\columnwidth]{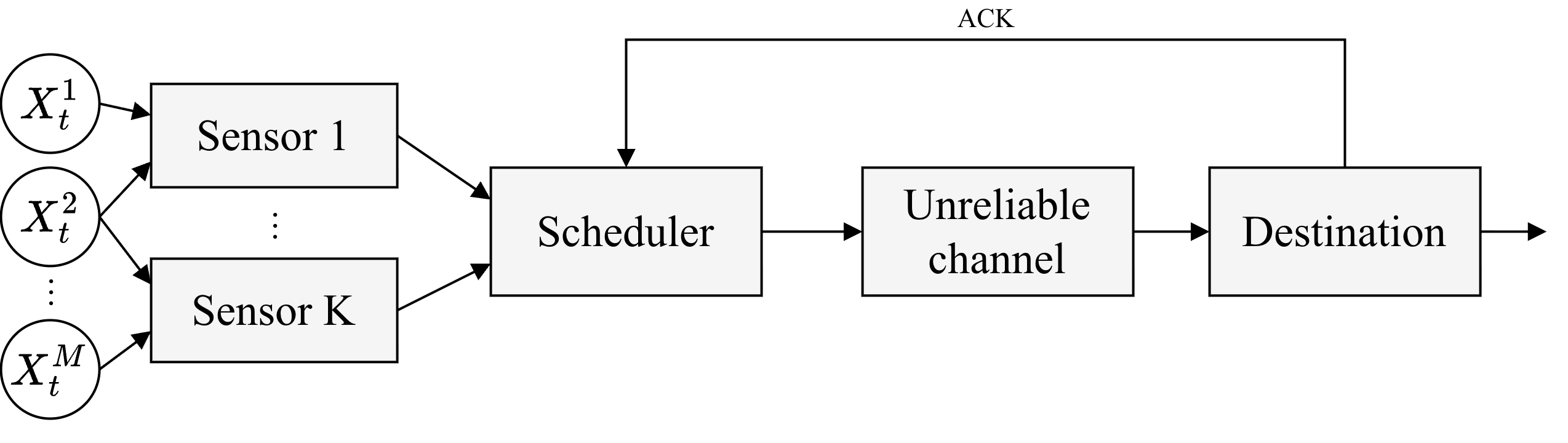}
\caption{System architecture}
\label{fig:system}
\end{figure}

Each source $m\in\mathcal{M}$ evolves as a finite-state, time-homogeneous Markov chain $\{X_t^m\}$ on $\mathcal{X}^m=\{1,\ldots,N_m\}$, with transition matrix $Q^m$, where $Q_{ij}^m=\Pr(X_{t+1}^m=j\mid X_t^m=i)$. We assume that each source chain is irreducible and aperiodic and that the $M$ source chains evolve independently. Consequently, their product chain is also irreducible and aperiodic. We call $Q^m$ slowly evolving if it is diagonally dominant and rapidly evolving otherwise.

A central scheduler activates at most one sensing edge in each slot. Its action is
\begin{equation}\label{eq:action}
A_t = (k_t, m_t) \in \mathcal{A} := \{(0,0)\}\cup\mathcal{E},
\end{equation}
where $A_t=(0,0)$ denotes an idle slot. If $A_t=(k,m)\in\mathcal{E}$, sensor $k$ samples the current state $X_t^m$ of source $m$ and transmits it over its uplink to the destination. 

Let $H_t^k\in\{0,1\}$ denote the transmission outcome on sensor $k$’s uplink in slot $t$, where $H_t^k=1$ indicates a successful transmission. For each $k\in\mathcal{K}$, $\{H_t^k\}_t$ is an i.i.d. Bernoulli process satisfying
$\Pr(H_t^k=1)=p_s^k$, $p_s^k\in(0,1]$. The channel outcomes are independent across sensors and slots and are independent of the source processes. Sensor $k$’s uplink has a fixed delay $d_k\in\{0,1\}$ slots. The transition kernels $\{Q^m\}$ and delays $\{d_k\}$ are known. Acknowledgements are immediate and error-free.

We adopt a centralized full-information model in which the scheduler obtains the current source states through a reliable local control plane and tracks the receiver estimates using error-free acknowledgements.
\begin{assumption}\label{ass:info}
At the start of every slot $t$, before selecting $A_t$, the scheduler knows the current source-states $\{X^m_t\}_{m\in\mathcal{M}}$, the receiver estimates $\{\hat X^m_t\}_{m\in\mathcal{M}}$, and all past actions and acknowledgements.
\end{assumption}

\subsection{Reconstruction and cost of actuation error}

If $A_t=(k,m)$ and the transmission succeeds, the receiver
replaces the estimate of source $m$ with the sampled value;
otherwise, it retains the previous estimate. Thus
\begin{equation}\label{eq:recon-next}
\hat X_{t+1}^n =
\begin{cases}
X_t^m, & \text{if } A_t=(k,m),\ n=m,\ H_t^k=1,\\
\hat X_t^n, & \text{otherwise}.
\end{cases}
\end{equation}
The delay $d_k$ determines whether a successfully transmitted
update can be used in the current slot. Let $\tilde{X}_t^n$
denote the estimate available to the actuator in slot $t$, given by
\begin{equation}\label{eq:actuation-estimate}
\tilde X_t^n =
\begin{cases}
X_t^m, & \text{if } A_t=(k,m),\ n=m,\ d_k=0,\ H_t^k=1,\\
\hat X_t^n, & \text{otherwise}.
\end{cases}
\end{equation}
Following~\cite{Pappas2021GoalOriented,LuoPappas2024SemAware}, for each source $m$, we use a state-dependent and generally asymmetric per-source cost $\delta^m:\mathcal{X}^m\times\mathcal{X}^m\to\mathbb{R}_{\ge0}$ satisfying $\delta^m(i,i)=0$ and in general $\delta^m_{i,j}\neq\delta^m_{j,i}$. The one-slot expected cost of actuation error under action $a$ is,
\begin{equation}\label{eq:cae}
c(s,a):=\mathbb{E}\Bigl[\textstyle\sum_{m\in\mathcal{M}}\omega_m\,
\delta^m\bigl(X_t^m,\tilde X_t^m(a,H_t)\bigr)\ \Big|\ S_t=s,A_t=a\Bigr],
\end{equation}
where $S_t$ denotes the joint system state defined in
Section~\ref{sec:stb}, and $\omega_m\geq 0$ is a weight capturing the importance of source $m$.. Charging the cost at the actuation epoch of the current slot keeps a single accounting epoch per slot even though sensors have different delays.

\subsection{State, transitions, and budgets}\label{sec:stb}

For each source $m$ let $S^m_t=(X^m_t,\hat X^m_t)\in\mathcal{S}^m:= \mathcal{X}^m\times\mathcal{X}^m$. The joint state is $S_t=(S^m_t)_{m}\in\mathcal{S}:=\prod_m\mathcal{S}^m$, so $|\mathcal{S}|=\prod_m N_m^2$. The sources evolve independently, whereas only the receiver estimate of the selected source can change. Given $a=(k,m)$, the estimate transition of source $m$ depends on the sensor through $p^k_s$, while $d_k$ affects the current-slot cost through~\eqref{eq:actuation-estimate}. The kernel factorizes as $ P(s'\mid s, a) = \prod_{m\in\mathcal{M}} P^m\bigl(s'^{(m)}\mid s^{(m)},a\bigr)$, where $P^m$ denotes the transition kernel of the local
state $(X_t^m,\hat{X}_t^m)$. The source state evolves
according to $Q^m$. If $a=(k,m)$, the receiver estimate
is updated to the sampled state with probability $p_s^k$
and remains unchanged with probability $1-p_s^k$,
as specified in \eqref{eq:recon-next}. If source $m$ is not selected, its receiver estimate remains unchanged.

Define the global and per-sensor transmission indicators
\begin{equation}\label{eq:indicators}
f_0(A_t) = \mathbf{1}\{A_t\neq(0,0)\}, \quad
f_k(A_t) = \mathbf{1}\{k_t=k\},\ k\in\mathcal{K},
\end{equation}
and let $\bar F_{j,s}(\pi):=\limsup_{T\to\infty}\frac1T\sum_{t=1}^{T} \mathbb{E}^\pi[f_j(A_t)\mid S_1=s]$. The system must satisfy a global budget $F_{\max}\in(0,1]$ and a per-sensor budget $F_{\max,k}\in(0,1]$
\begin{equation}\label{eq:freq}
\bar F_{0,s}(\pi)\le F_{\max}, \qquad
\bar F_{k,s}(\pi) \le F_{\max,k}\quad\forall k\in\mathcal{K}.
\end{equation}
The action set~\eqref{eq:action} already enforces the per-slot TDMA constraint $\sum_k f_k(A_t)\le1$. Moreover, the transmission indicators satisfy $f_0(a)=\sum_{k\in\mathcal{K}}f_k(a)$. Thus, the $K+1$ transmission-indicator functions are linearly dependent and span a space of dimension at most $K$.

\subsection{CMDP formulation}\label{sec:formulation}

Under Assumption~\ref{ass:info}, the system state $S_t$ is fully observed. Let $I_t=(S_1,\ldots,S_t,A_1,\ldots,A_{t-1})$ denote the observed state--action history available before selecting $A_t$. A randomized policy $\pi=\{\pi_t\}_{t\geq 1}$ specifies an action distribution $\pi_t(\cdot\mid I_t)\in\Delta(\mathcal{A})$ at each slot, where $\Delta(\mathcal{A})$ denotes the set of probability distributions over $\mathcal{A}$. Let $\Pi$ denote the class of all such policies. Let $\Pi_{\mathrm{MD}}$ and $\Pi_{\mathrm{MR}}$ denote the classes of stationary Markov deterministic (MD) and stationary Markov randomized (MR) policies, respectively, with $\Pi_{\mathrm{MD}}\subseteq\Pi_{\mathrm{MR}}\subseteq\Pi$.
\begin{definition}\label{def:cmdp}
The hierarchical TDMA scheduling problem is
\begin{equation}\label{eq:CMDP}
\begin{aligned}
\mathcal{P}_{\mathrm{CMDP}}:\quad \min_{\pi\in\Pi}\ &\bar C_s(\pi)\\
\text{s.t.}\quad &\bar F_{0,s}(\pi) \le F_{\max},\\
&\bar F_{k,s}(\pi) \le F_{\max,k}, \quad \forall k\in\mathcal{K}.
\end{aligned}
\end{equation}
\end{definition}

We write $C^*$ for the optimal value of~\eqref{eq:CMDP} and set $F_{\max,0}:=F_{\max}$.

\section{Structure of the Constrained Optimum}\label{sec:structure}

We do not assume any threshold or monotonicity structure for the optimal joint policy: the CAE can be asymmetric and the coverage sets can overlap, and neither property survives in general. What we do establish is the structure that holds regardless: existence and attainment of the optimum, how many deterministic policies a randomized optimal policy needs, the shape of the dual, and how many independent transmission costs the problem actually has.

\subsection{Strict feasibility and reachability}

The positive budgets give a strictly feasible policy directly. Pick $\epsilon\in(0,\min\{F_{\max}/K,\min_kF_{\max,k}\})$, set $\alpha_k=\epsilon$, and choose
\begin{equation}\label{eq:slater-margin}
0<\sigma\le \min\bigl\{F_{\max}-K\epsilon,\ \min_k(F_{\max,k}-\epsilon)\bigr\}.
\end{equation}
The state-independent policy $\pi^{\mathrm{SA}}$ activates sensor $k$ with probability $\alpha_k$, picks a source uniformly from $\mathcal{M}_k$, and idles otherwise.

\begin{lemma}\label{lem:slater}
The policy $\pi^{\mathrm{SA}}$ satisfies $\bar F_j(\pi^{\mathrm{SA}})\le F_{\max,j}-\sigma$ for $j=0,\dots,K$. Moreover, under the model assumptions of Section~\ref{sec:system}, the finite MDP on $\mathcal{S}$ is communicating in the sense of~\cite[Sec.~8.3]{Puterman1994}, and the chain induced by $\pi^{\mathrm{SA}}$ is irreducible and aperiodic.
\end{lemma}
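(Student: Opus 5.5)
The proof splits into three parts: the frequency bounds, which follow from the definition of $\pi^{\mathrm{SA}}$; irreducibility and aperiodicity of the chain induced by $\pi^{\mathrm{SA}}$, which is the main work; and the communicating property, which follows from irreducibility.

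\emph{Frequency bounds.} Because $\pi^{\mathrm{SA}}$ ignores the state, the action $A_t$ is i.i.d. across slots. Its law is $\Pr(k_t=k)=\alpha_k=\epsilon$ for each $k$, and the idle probability is $1-K\epsilon$. This law is well defined because $\epsilon<F_{\max}/K\le 1/K$. Hence $\bar F_{k,s}(\pi^{\mathrm{SA}})=\epsilon\le F_{\max,k}-\sigma$ for every $s$ and $k$. Using $f_0=\sum_k f_k$, we also get $\bar F_{0,s}(\pi^{\mathrm{SA}})=K\epsilon\le F_{\max}-\sigma$. Both inequalities are exactly the choice of $\sigma$ in \eqref{eq:slater-margin}.

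\emph{Irreducibility and aperiodicity.} I will show there is an $n$ such that $P_{\mathrm{SA}}^n(s,s')>0$ for all $s,s'\in\mathcal{S}$. By the standard characterization, this gives an irreducible, aperiodic chain. Since each $Q^m$ is irreducible and aperiodic on a finite space, there is an $L$ with $(Q^m)^\ell>0$ entrywise for all $m$ and all $\ell\ge L$. Fix a target $s'=(x'^m,\hat x'^m)_m$ and order the sources $1,\dots,M$. I build a positive-probability path of length $n=(M+1)L+M$, with one block per source followed by a final stretch.
\begin{itemize}
\item[(i)] The block for source $m$ starts with $L$ idle slots. Idling has probability $1-K\epsilon>0$. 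During these slots each source chain may be driven to any state.
\item[(ii)] At the end of the idle slots we require source $m$ to be in state $\hat x'^m$.
\item[(iii)] The block then ends with one slot in which some $k\in\mathcal{K}_m$ is chosen with source $m$. This has probability $\epsilon/|\mathcal{M}_k|>0$, the transmission succeeds with probability $p_s^k>0$, and so $\hat X^m$ becomes $\hat x'^m$.
\item[(iv)] Later blocks never select source $m$ again, so this estimate is frozen.
\item[(v)] After all $M$ blocks, $L$ further idle slots bring every $X^m$ to $x'^m$, which has positive probability by the choice of $L$.
\end{itemize}
Every step has positive probability, because the sources are independent and the channel is independent of them. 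Since $n$ does not depend on $s,s'$, this proves aperiodicity as well as irreducibility.

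\emph{Communicating property.} The MDP is communicating because a single stationary policy, $\pi^{\mathrm{SA}}$, induces an irreducible chain on all of $\mathcal{S}$. So every state is reachable from every other state under some policy, which is the definition in \cite[Sec.~8.3]{Puterman1994}.

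The main obstacle is the bookkeeping in the irreducibility step. The sampled value is $X_t^m$ at the moment of transmission, so I must make sure the source sits at $\hat x'^m$ exactly at the sampling slot. I must also check that freezing the earlier estimates does not conflict with the later source moves. The fixed block length $L$, chosen via primitivity of each $Q^m$, handles both issues.
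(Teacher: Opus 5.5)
Your proof is correct and follows the paper's argument. The frequency bounds come directly from $\bar F_k(\pi^{\mathrm{SA}})=\epsilon$ and $\bar F_0(\pi^{\mathrm{SA}})=K\epsilon$. Reachability comes from setting the estimates one source at a time: activate a covering edge when the source sits at its target value, freeze that estimate, and finally idle until the sources reach their target states.

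The paper gets aperiodicity from idle-only return paths of the product source chain; you instead fix every waiting time at $L$ via primitivity of the $Q^m$, so $P^n_{\mathrm{SA}}>0$ entrywise gives aperiodicity at once. One small point: Puterman defines communicating via deterministic stationary policies, and you can extract one from a shortest positive-probability path under $\pi^{\mathrm{SA}}$.
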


\begin{proof}[Proof sketch]
The frequencies follow from $\bar F_0(\pi^{\mathrm{SA}})=\sum_k\alpha_k$, $\bar F_k(\pi^{\mathrm{SA}})=\alpha_k$, and~\eqref{eq:slater-margin}. For reachability, drive the estimates to their targets one at a time: wait until source $m$ visits the target estimate value, then activate any covering edge, which has positive probability because $\pi^{\mathrm{SA}}$ uses every edge and $p^k_s>0$. Estimates already set are not disturbed while other sources are updated. Then idle until the product source chain reaches the target true state, again with positive probability. Since idling has positive probability, the aperiodic return paths of the product source chain are return paths of the joint chain.
\end{proof}

Strict feasibility is what the dual analysis needs; the communicating property is what makes the average-cost results independent of the initial state.

\subsection{Lagrangian relaxation and existence}

We dualize the $K{+}1$ constraints with $\boldsymbol\lambda=(\lambda_0,\lambda_1,\dots,\lambda_K)\in\mathbb{R}_+^{K+1}$. The Lagrangian one-slot cost is
\begin{equation}\label{eq:lagcost}
\ell^{\boldsymbol\lambda}(s,a) = c(s,a) + \lambda_0 f_0(a)
+ \sum_{k=1}^{K}\lambda_k f_k(a),
\end{equation}
and the inner problem is $L^*(\boldsymbol\lambda):=\inf_{\pi\in\Pi}\bar L^{\boldsymbol\lambda}_s(\pi)$ with $\bar L^{\boldsymbol\lambda}_s(\pi)=\bar C_s(\pi)+\sum_j\lambda_j \bar F_{j,s}(\pi)$. The dual objective is
\begin{equation}\label{eq:dual-objective}
D(\boldsymbol\lambda)=L^*(\boldsymbol\lambda)
-\sum_{j=0}^{K}\lambda_jF_{\max,j},
\end{equation}
which lower-bounds $C^*$ for every $\boldsymbol\lambda\ge0$ by weak duality.

\begin{theorem}\label{thm:bellman}
For every $\boldsymbol\lambda\ge0$ the inner problem has a stationary deterministic optimal policy $\pi^*_{\boldsymbol\lambda}\in\Pi_{MD}$, the optimal gain $L^*(\boldsymbol\lambda)$ does not depend on the initial state, and there is a bias vector $\bm h^{\boldsymbol\lambda}$ with
\begin{equation}\label{eq:bellman}
L^*(\boldsymbol\lambda) + h^{\boldsymbol\lambda}(s)
= \min_{a\in\mathcal{A}}\Bigl\{\ell^{\boldsymbol\lambda}(s,a)
+ \sum_{s'}P(s'\mid s,a)h^{\boldsymbol\lambda}(s')\Bigr\}.
\end{equation}
\end{theorem}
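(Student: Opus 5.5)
The plan is to reduce Theorem~\ref{thm:bellman} to the standard theory of finite average-cost MDPs that are communicating, using Lemma~\ref{lem:slater}. For fixed $\boldsymbol\lambda\ge0$, the inner problem is an unconstrained average-cost MDP with the following data: the finite state space $\mathcal{S}$, the finite action set $\mathcal{A}$ (all actions are admissible in every state), the kernel $P(s'\mid s,a)$, and the bounded one-slot cost $\ell^{\boldsymbol\lambda}(s,a)$. Boundedness holds because $c$ is a finite expectation of finitely many nonnegative $\delta^m$ values and the $f_j$ are indicators.

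The first step checks that the structural hypothesis is independent of $\boldsymbol\lambda$. Lemma~\ref{lem:slater} says the MDP is communicating in the sense of~\cite[Sec.~8.3]{Puterman1994}. This is a property of $(\mathcal{S},\mathcal{A},P)$ only, so it holds for every $\boldsymbol\lambda$.

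The second step invokes the existence result for communicating finite MDPs~\cite[Thm.~8.3.2 and Sec.~9.1]{Puterman1994}. For a finite communicating model, the optimal average-cost gain is a constant $g^*$. There is also a pair $(g^*,\bm h)$ solving the unichain-type optimality equation~\eqref{eq:bellman}. The route I would cite is the following:
\begin{itemize}
\item first, the multichain optimality equations have a solution in any finite MDP;
\item second, in the communicating case the optimal gain vector is constant. The reason is that from any state one can reach, with positive probability under some stationary policy, a recurrent class that attains the minimal gain, and then follow the optimal policy there.
\end{itemize}
With constant gain, the first multichain equation is trivial, and the second reduces to~\eqref{eq:bellman}.

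Any deterministic selector $\pi^*_{\boldsymbol\lambda}(s)\in\arg\min_a\{\cdots\}$ of the right-hand side of~\eqref{eq:bellman} is then stationary deterministic. Iterating~\eqref{eq:bellman} along its trajectory gives average Lagrangian cost $g^*$. Iterating the inequality form for an arbitrary history-dependent $\pi\in\Pi$ gives, for each finite $T$,
\[
\sum_{t=1}^T\mathbb{E}^\pi\bigl[\ell^{\boldsymbol\lambda}(S_t,A_t)\bigr]\ge T g^*+h(s)-\mathbb{E}^\pi\bigl[h(S_{T+1})\bigr].
\]
Dividing by $T$ and letting $T\to\infty$, boundedness of $h$ on the finite state space yields $\bar L^{\boldsymbol\lambda}_s(\pi)\ge g^*$. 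Hence $L^*(\boldsymbol\lambda)=g^*$ for every initial state $s$, and the infimum over $\Pi$ is attained in $\Pi_{\mathrm{MD}}$.

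One subtlety is the definition of $\bar L^{\boldsymbol\lambda}_s(\pi)=\bar C_s+\sum_j\lambda_j\bar F_{j,s}$ as a sum of separate limsups, rather than the limsup of the sum. The limsup of the sum is at most the sum of the limsups, so the lower bound above transfers to $\bar L^{\boldsymbol\lambda}_s(\pi)$. For stationary policies, all the averages converge (Cesàro limits exist for finite chains), so the two definitions agree for $\pi^*_{\boldsymbol\lambda}$.

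The main obstacle is the constant-gain claim when optimal deterministic policies may be multichain. As the introduction notes, for instance, never updating some source produces several recurrent classes. I would handle this directly rather than by citation only. Take a multichain-optimal stationary deterministic policy, with gain vector $g$, and let $s^\star$ be a state minimizing $g$. Using the communicating property, build the following modified policy:
\begin{itemize}
\item outside the recurrent class of $s^\star$, it follows a path that reaches that class with positive probability, for example under $\pi^{\mathrm{SA}}$, which is irreducible;
\item inside that class, it agrees with the optimal policy.
\end{itemize}
This policy attains gain $\min g$ from every state, so the optimal gain is constant. From there, existence of $h$ satisfying~\eqref{eq:bellman} follows from~\cite[Thm.~9.1.8]{Puterman1994}, or from a vanishing-discount argument in which the relative discounted values stay bounded because the model is communicating.
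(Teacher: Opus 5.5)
Your proposal is correct and follows the same route as the paper's sketch: Lemma~\ref{lem:slater} gives a communicating model, hence a constant optimal gain, hence the single optimality equation~\eqref{eq:bellman} with an optimal minimizing selector via~\cite[Ch.~8--9]{Puterman1994}. Your telescoping verification against history-dependent policies, the limsup-of-sums remark, and the direct constant-gain construction spell out what the paper leaves to the citation. One fix in that construction: pick $s^\star$ inside a recurrent class of minimal gain, not merely at a state minimizing $g$. Such a class always exists, since the gain at a transient state is an average of class gains, and this makes ``the recurrent class of $s^\star$'' well defined.
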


\begin{proof}[Proof sketch]
$\mathcal{S}$ and $\mathcal{A}$ are finite and the CAE matrices are finite, so $\ell^{\boldsymbol\lambda}$ is bounded. By Lemma~\ref{lem:slater} the MDP is communicating, hence weakly communicating, so the optimal gain is constant in the initial state and the constant-gain optimality equation~\eqref{eq:bellman} has a solution whose minimizing selector is average-cost optimal~\cite[Ch.~8--9]{Puterman1994}. Multichain theory is needed because a fixed stationary policy may be multichain: a deterministic policy can stop updating a source, which freezes its estimate.
\end{proof}

\begin{remark}\label{rem:edge-gain}
Write $a_0:=(0,0)$. For an edge $(k,m)\in\mathcal{E}$ define
\begin{equation}\label{eq:edge-gain}
\begin{aligned}
\Gamma_{k,m}^{\boldsymbol\lambda}(s):={}&c(s,a_0)-c\bigl(s,(k,m)\bigr)
-(\lambda_0+\lambda_k)\\
&+\sum_{s'}\bigl[P(s'\mid s,a_0)-P(s'\mid s,(k,m))\bigr]
h^{\boldsymbol\lambda}(s').
\end{aligned}
\end{equation}
Equation~\eqref{eq:bellman} says that one optimal selector idles when $\max_{(k,m)}\\\Gamma^{\boldsymbol\lambda}_{k,m}(s)\le0$ and otherwise activates a maximizing edge, so the decision compares the immediate CAE reduction and the future bias reduction against the effective transmission cost $\mu_k=\lambda_0+\lambda_k$ of sensor $k$. This is exact, not a low-complexity rule: $h^{\boldsymbol\lambda}$ lives on the joint state space, and for a general asymmetric CAE we do not claim that $\Gamma^{\boldsymbol\lambda}_{k,m}$ is monotone in the state.
\end{remark}

\subsection{The dual is piecewise linear and concave}

Some stationary deterministic policies are multichain, so quantities such as $\bar C(\pi)$ can depend on the initial state. We therefore anchor all averages at an arbitrary fixed $s\in\mathcal{S}$; the Ces\`aro limits exist for every stationary policy on a finite chain.

\begin{theorem}\label{thm:pwlc}
For any fixed $s\in\mathcal{S}$, the inner value $L^*(\boldsymbol\lambda)$ is the lower envelope of finitely many affine functions of $\boldsymbol\lambda$,
\begin{equation}\label{eq:pwlc-envelope}
L^*(\boldsymbol\lambda)=\min_{\pi\in\Pi_{MD}}\Bigl\{
\bar C_s(\pi)+\textstyle\sum_{j=0}^{K}\lambda_j\bar F_{j,s}(\pi)\Bigr\},
\end{equation}
and the value of the envelope does not depend on $s$. Hence $L^*$ is concave, piecewise linear, and nondecreasing in every coordinate, and so is $D$ up to an affine term. Moreover
\begin{equation}\label{eq:superdiff}
\partial L^*(\boldsymbol\lambda)=\mathrm{conv}\Bigl\{
\bigl(\bar F_{0,s}(\pi),\dots,\bar F_{K,s}(\pi)\bigr):
\pi\ \text{active at }\boldsymbol\lambda\Bigr\}.
\end{equation}
\end{theorem}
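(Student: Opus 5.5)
The plan is to derive the envelope formula~\eqref{eq:pwlc-envelope} directly from Theorem~\ref{thm:bellman}, and then read off concavity, monotonicity and the superdifferential from standard facts about minima of finitely many affine functions.

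\emph{Step 1: envelope identity.} Fix $s\in\mathcal{S}$ and $\boldsymbol\lambda\ge0$. For every $\pi\in\Pi_{MD}$ the chain is finite and stationary, so the Ces\`aro limits $\bar C_s(\pi)$ and $\bar F_{j,s}(\pi)$ exist as limits, not merely limsups. Since $\ell^{\boldsymbol\lambda}$ is affine in $\boldsymbol\lambda$, we get
\[
\bar L^{\boldsymbol\lambda}_s(\pi)=\bar C_s(\pi)+\textstyle\sum_j\lambda_j\bar F_{j,s}(\pi),
\]
which is affine in $\boldsymbol\lambda$. Because $\Pi_{MD}\subseteq\Pi$, we have $L^*(\boldsymbol\lambda)\le\min_{\pi\in\Pi_{MD}}\bar L^{\boldsymbol\lambda}_s(\pi)$. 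Theorem~\ref{thm:bellman} supplies a $\pi^*_{\boldsymbol\lambda}\in\Pi_{MD}$ whose gain equals $L^*(\boldsymbol\lambda)$ from every initial state, including $s$. Hence the minimum is attained and equals $L^*(\boldsymbol\lambda)$, and the value is independent of $s$ because $L^*$ is. Note that individual affine pieces may depend on $s$ for multichain $\pi$; only their lower envelope does not. The set $\Pi_{MD}$ is finite, with $|\mathcal{A}|^{|\mathcal{S}|}$ elements, so $L^*$ is a minimum of finitely many affine functions.

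\emph{Step 2: shape.} A finite minimum of affine functions is concave and piecewise linear. Each slope vector $(\bar F_{0,s}(\pi),\dots,\bar F_{K,s}(\pi))$ lies in $[0,1]^{K+1}$ because the indicators take values in $\{0,1\}$. Every piece is therefore nondecreasing in each coordinate, and so is their minimum. The function $D$ differs from $L^*$ only by the linear term $-\sum_j\lambda_jF_{\max,j}$.

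\emph{Step 3: superdifferential.} Call $\pi$ active at $\boldsymbol\lambda$ if its affine piece attains the minimum there. For a finite pointwise minimum of affine functions $g_\pi$, I will invoke the standard max-rule (Danskin), applied to the convex function $-L^*$. It gives $\partial L^*(\boldsymbol\lambda)=\mathrm{conv}\{\nabla g_\pi:\pi\text{ active}\}$ at interior points of the domain.

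The main obstacle is the boundary of $\mathbb{R}_+^{K+1}$. There the clean formula still holds if $L^*$ is regarded as the restriction of the same envelope, which is defined and finite on all of $\mathbb{R}^{K+1}$. I will state the result for this extension to avoid normal-cone terms.

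For the inclusion $\supseteq$: each active gradient $g$ satisfies
\[
L^*(\boldsymbol\lambda')\le g_\pi(\boldsymbol\lambda')=L^*(\boldsymbol\lambda)+g^\top(\boldsymbol\lambda'-\boldsymbol\lambda),
\]
so $g$ is a supergradient, and $\partial L^*(\boldsymbol\lambda)$ is convex, so it contains the convex hull.

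For the inclusion $\subseteq$: near $\boldsymbol\lambda$ only active pieces matter, since the inactive ones are strictly larger and remain so by continuity. Locally $L^*$ then coincides with the minimum of the active pieces, and a separation argument shows that a supergradient outside their convex hull contradicts the local representation along the separating direction.

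A secondary subtlety is that active policies may be multichain, so their frequency vectors depend on $s$. Consequently the set in~\eqref{eq:superdiff} may look $s$-dependent even though $\partial L^*$ is not; this causes no contradiction, since the identity holds for each fixed $s$.
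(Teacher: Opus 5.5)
Your proposal is correct and follows essentially the same route as the paper: the envelope identity comes from finiteness of $\Pi_{MD}$, the trivial bound $L^*(\boldsymbol\lambda)\le\bar L^{\boldsymbol\lambda}_s(\pi)$, and the Bellman-optimal deterministic policy of Theorem~\ref{thm:bellman} attaining $L^*(\boldsymbol\lambda)$ from every anchor, and then the finite max-rule (Danskin) gives the superdifferential. Your extra remarks are points the paper's sketch leaves implicit, not departures from it: that Ces\`aro limits exist, so the limsup splits across the affine terms, and that the superdifferential formula should be read for the envelope extended to $\mathbb{R}^{K+1}$ to avoid normal-cone terms on the boundary of $\mathbb{R}_+^{K+1}$.
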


\begin{proof}[Proof sketch]
$\Pi_{MD}$ is finite, and for each fixed $\pi\in\Pi_{MD}$ the anchored average $\bar L^{\boldsymbol\lambda}_s(\pi)$ is affine in $\boldsymbol\lambda$ and at least $L^*(\boldsymbol\lambda)$. Conversely Theorem~\ref{thm:bellman} supplies, for each $\boldsymbol\lambda$, a deterministic policy whose gain equals $L^*(\boldsymbol\lambda)$ from every initial state, so its affine piece attains the envelope for every anchor. A minimum of finitely many affine functions is concave, continuous, and piecewise linear, and it is nondecreasing in each coordinate because every $\bar F_{j,s}(\pi)\ge0$. Danskin's theorem gives the superdifferential as the convex hull of the slopes of the active pieces~\cite[Thm.~D.4.4.2]{HiriartUrruty2001Fundamentals}.
\end{proof}

Equation~\eqref{eq:superdiff} is the practical part of the theorem: a supergradient of the dual at $\boldsymbol\lambda$ is the vector of constraint violations of any inner-optimal deterministic policy, so one Bellman solve produces both the dual value and a search direction.

\subsection{Existence and the size of the optimal mixture}

The following statement is the main structural result. It uses the occupation-measure linear program for multichain average-cost models~\cite{HordijkKallenberg1984,Altman1999CMDP}. Because deterministic policies here can be multichain, the mixture is stated over policy--recurrent-class pairs: for $\pi\in\Pi_{MD}$ and a recurrent class $R$ of $\pi$, let $\bar C(\pi;R)$ and $\bar F_j(\pi;R)$ be the stationary averages of $\pi$ started inside $R$.

\begin{theorem}\label{thm:mixture}
The constrained optimum $C^*$ is independent of the initial state and is attained. There exist $\pi^*_i\in\Pi_{MD}$, recurrent classes $R_i$ of $\pi^*_i$, and weights $q_i\ge0$ with $\sum_{i=0}^{K}q_i=1$ such that
\begin{equation}\label{eq:mixture}
C^*=\sum_{i=0}^{K} q_i\,\bar C(\pi^*_i;R_i),\quad
\sum_{i=0}^{K} q_i\,\bar F_j(\pi^*_i;R_i)\le F_{\max,j}\ \ \forall j.
\end{equation}
An optimal policy is realized by a single randomization at $t=1$: with probability $q_i$, steer the chain into $R_i$, which is possible from any initial state because the MDP is communicating, and follow $\pi^*_i$ afterwards; the finite transient prefix does not change long-run averages. If the constraints binding at the optimum span a space of dimension $r\le K$, at most $r{+}1$ deterministic policies are needed, which sharpens the classical bound of one more than the number of constraints~\cite{Ross1989MultiConstraint}.
\end{theorem}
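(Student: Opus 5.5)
The plan is to work entirely with the set of achievable long-run state--action frequencies. Let $\mathcal{Q}$ be the set of \emph{stationary} occupation measures $x\in\mathbb{R}^{|\mathcal{S}|\times|\mathcal{A}|}_{\ge0}$, meaning those satisfying the balance equations
\[
\sum_a x(s',a)=\sum_{s,a}P(s'\mid s,a)x(s,a)
\]
and $\sum_{s,a}x(s,a)=1$. This is a polytope. Its extreme points are exactly the stationary distributions $x_{\pi,R}$ of deterministic policies $\pi\in\Pi_{MD}$ restricted to a recurrent class $R$ \cite{HordijkKallenberg1984,Altman1999CMDP}.

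\textbf{Step 1: lower bound.} For any $\pi\in\Pi$ and initial state $s$, the Ces\`aro averages of the state--action marginals have limit points. Every such limit point lies in $\mathcal{Q}$, because the balance defect of the time-$T$ average is $O(1/T)$. Taking a subsequence along which $\frac1T\sum_t \mathbb{E}^\pi[c]$ approaches $\bar C_s(\pi)$ gives a point $x\in\mathcal{Q}$ with $\langle c,x\rangle\le\bar C_s(\pi)$. Since each $\bar F_{j,s}$ is a $\limsup$, the same point also satisfies $\langle f_j,x\rangle\le \bar F_{j,s}(\pi)\le F_{\max,j}$. Hence $C^*_s\ge V$, where
\[
V:=\min\{\langle c,x\rangle : x\in\mathcal{Q},\ \langle f_j,x\rangle\le F_{\max,j}\ \forall j\}.
\]
The feasible set contains the stationary distribution of $\pi^{\mathrm{SA}}$ by Lemma~\ref{lem:slater}, so it is nonempty and compact. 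The minimum $V$ is therefore attained at some $x^*$, and $V$ does not depend on $s$.

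\textbf{Step 2: Carath\'eodory reduction.} Write $x^*$ as a convex combination of extreme points $x_{\pi_i,R_i}$. Consider the map $x\mapsto(\langle c,x\rangle,\langle f_0,x\rangle,\dots,\langle f_K,x\rangle)$. The identity $f_0=\sum_k f_k$ noted in Section~\ref{sec:stb} makes its image lie in a $(K{+}1)$-dimensional space. The image of $x^*$ is a point of this image of the polytope, and it lies on its boundary: by optimality it minimizes the first coordinate over the constrained face. A lower-boundary point of a convex hull in dimension $K{+}1$ lies on a face of dimension at most $K$. Carath\'eodory applied within that face gives at most $K{+}1$ extreme images $(\bar C(\pi_i;R_i),\bar F_j(\pi_i;R_i))$ with weights $q_i$ reproducing the value and the frequency vector. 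This yields \eqref{eq:mixture}.

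For the sharper claim, restrict the optimization to the binding constraints, since the slack ones can be dropped locally without changing the optimum face. The relevant image then lives in $1+r$ dimensions, and the same argument gives at most $r{+}1$ components.

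\textbf{Step 3: realizability and attainment.} Define the policy in the statement. At $t=1$, draw $i$ with probability $q_i$. Then use a stationary policy that reaches $R_i$ with probability one from any state; this exists because the MDP is communicating (Lemma~\ref{lem:slater}), and the hitting time has finite expectation on a finite chain. After reaching $R_i$, follow $\pi^*_i$. Once inside $R_i$ the chain stays there under $\pi^*_i$, so the Ces\`aro averages converge to $\bar C(\pi_i;R_i)$ and $\bar F_j(\pi_i;R_i)$, and the almost surely finite prefix contributes $O(1/T)$. The conditional averages are genuine limits, so $\bar C_s$ and every $\limsup$ frequency equal the $q$-mixtures. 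The policy is therefore feasible and achieves $V$. Combined with Step 1, this gives $C^*=V$, attained, and independent of $s$.

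\textbf{Main obstacle.} I expect Step 1 to be the main difficulty. The definitions mix $\limsup$ for the constraints with whatever averaging defines $\bar C_s$, and the policy $\pi$ may be history-dependent and multichain. A single subsequence must be chosen so that the cost limit is at most $\bar C_s(\pi)$ while every constraint limit is bounded by its $\limsup$. Choosing the subsequence to realize the cost's $\liminf$ handles both requirements, which is the standard trick. The remaining nontrivial input is the extreme-point characterization of $\mathcal{Q}$, which I would cite from \cite{HordijkKallenberg1984} rather than reprove.
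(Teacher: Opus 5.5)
Your Step~1 (limit points of the Ces\`aro occupation measures, with the subsequence chosen along the cost) and your Step~3 are sound. Step~1 actually spells out the converse that the paper leaves to the citation. Your $K{+}1$ bound is also correct, by a route that differs from the paper's. You apply Carath\'eodory to the image point $(\langle c,x^*\rangle,\langle f_0,x^*\rangle,\dots,\langle f_K,x^*\rangle)$ and use optimality to place it on a face of dimension at most $K$. That boundary claim needs a one-line case split. If the image has dimension $K{+}1$, its affine hull is the whole subspace $\{y_{f_0}=\sum_k y_{f_k}\}$, which contains the direction of decreasing cost. Otherwise Carath\'eodory applies to the image directly.

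The genuine gap is in the sharper $r{+}1$ claim. After dropping the slack constraints, you decompose only $\bigl(\langle c,x^*\rangle,(\langle f_j,x^*\rangle)_{j\in B}\bigr)$, where $B$ is the binding set. So the weights $q_i$ reproduce the cost and the binding frequencies, but the mixture $\sum_i q_i x_{\pi_i,R_i}$ is in general a different point from $x^*$. Nothing then controls $\langle f_j,\cdot\rangle$ for the dropped $j$. ``Dropped locally'' is exactly what fails, because the Carath\'eodory vertices may lie far from $x^*$. Your argument also never takes $x^*$ to be a vertex of the feasible set, and for a non-vertex optimum the conclusion is false. Consider, in an abstract LP of the same form, a segment whose two endpoints each violate a different budget while its midpoint is strictly feasible, with constant cost. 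The midpoint is optimal with no binding constraint ($r=0$), yet no single vertex is feasible. Your argument would output an infeasible endpoint.

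The missing idea, which is the paper's route, is to take $x^*$ to be an extreme point of $\mathcal{Q}\cap H$ and to argue in occupation-measure space rather than in image space. Let $F$ be the minimal face of $\mathcal{Q}$ containing $x^*$, and suppose $\dim F>r$. Then some nonzero direction $d$ parallel to $F$ annihilates all binding functionals. Since $x^*$ lies in the relative interior of $F$ and the slack constraints stay slack for small $\epsilon$, both $x^*\pm\epsilon d$ remain in $\mathcal{Q}\cap H$, contradicting extremality. Hence $\dim F\le r$, and Carath\'eodory on $F$ writes $x^*$ \emph{itself} as a mixture of at most $r{+}1$ vertices $x_{\pi_i,R_i}$. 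Every constraint, binding or slack, is therefore preserved. The identity $f_0=\sum_k f_k$ gives $r\le K$, so the same argument also yields the $K{+}1$ bound.
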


\begin{proof}[Proof sketch]
The set $\mathcal{X}$ of long-run state--action occupation measures of stationary policies is a polytope defined by balance and normalization constraints~\cite{HordijkKallenberg1984}, and both the average cost and the average frequencies are linear on it. Since the MDP is communicating (Lemma~\ref{lem:slater}), every $x\in\mathcal{X}$ is reachable from every initial state after a finite prefix, which does not affect Ces\`aro averages; hence $C^*$ does not depend on $s$ and~\eqref{eq:CMDP} is equivalent to the linear program $\min_{x\in\mathcal{X}}\langle c,x\rangle$ subject to $\langle f_j,x\rangle\le F_{\max,j}$. The program is feasible by Lemma~\ref{lem:slater} and attains its optimum at an extreme point $x^*$ of $\mathcal{X}\cap H$, where $H$ is the intersection of the constraint half-spaces. Every extreme point of $\mathcal{X}$ is the occupation measure of a stationary deterministic policy supported on one recurrent class~\cite{HordijkKallenberg1984}. If $r$ is the rank of the binding constraint functionals at $x^*$, then $x^*$ lies on a face of $\mathcal{X}$ of dimension at most $r$, since otherwise that face would contain a nonzero direction annihilating all binding functionals. Carath\'eodory's theorem~\cite[Thm.~17.1]{Rockafellar1970Convex} then writes $x^*$ as a combination of at most $r{+}1$ extreme points, and then we obtain $r\le K$.
\end{proof}

\subsection{Effective transmission costs and the degenerate dual direction}
\label{sec:dual-geometry}

The $K{+}1$ multipliers are not $K{+}1$ independent search directions, for the same reason.

\begin{proposition}\label{prop:effective-price}
Let $\bm d:=(1,-1,\dots,-1)\in\mathbb{R}^{K+1}$ and $\kappa:=\sum_{k}F_{\max,k}-F_{\max}$. Then:
\begin{enumerate}
\item The Lagrangian cost~\eqref{eq:lagcost} depends on $\boldsymbol\lambda$ only through the $K$ \emph{effective transmission costs} $\mu_k:=\lambda_0+\lambda_k$,
\begin{equation}\label{eq:effective-price}
\ell^{\boldsymbol\lambda}(s,a)=c(s,a)+\sum_{k\in\mathcal{K}}\mu_k f_k(a),
\end{equation}
so $L^*(\boldsymbol\lambda+t\bm d)=L^*(\boldsymbol\lambda)$ whenever both points are nonnegative.
\item $D$ is exactly affine along $\bm d$: $D(\boldsymbol\lambda+t\bm d)=D(\boldsymbol\lambda)+t\kappa$.
\item Every dual maximizer lies on the boundary of $\mathbb{R}_+^{K+1}$: if $\kappa>0$ then $\min_k\lambda^*_k=0$, and if $\kappa<0$ then $\lambda^*_0=0$.
\end{enumerate}
\end{proposition}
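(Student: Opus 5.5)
The plan is to get all three parts from the identity $f_0(a)=\sum_{k\in\mathcal{K}}f_k(a)$, which holds for every $a\in\mathcal{A}$ and was noted after~\eqref{eq:indicators}. For part~1, substitute this identity into~\eqref{eq:lagcost}. This gives $\lambda_0f_0(a)+\sum_k\lambda_kf_k(a)=\sum_k(\lambda_0+\lambda_k)f_k(a)$, which is~\eqref{eq:effective-price}. Moving $\boldsymbol\lambda$ to $\boldsymbol\lambda+t\bm d$ sends $\lambda_0\mapsto\lambda_0+t$ and $\lambda_k\mapsto\lambda_k-t$, so every $\mu_k$ is unchanged. Hence $\ell^{\boldsymbol\lambda+t\bm d}(s,a)=\ell^{\boldsymbol\lambda}(s,a)$ pointwise for all $(s,a)$, and the inner problems are identical. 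It follows that $L^*(\boldsymbol\lambda+t\bm d)=L^*(\boldsymbol\lambda)$. We need both points in $\mathbb{R}_+^{K+1}$ only because $L^*$ and $D$ are defined there; the identity itself uses no sign condition.

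For part~2, insert part~1 into~\eqref{eq:dual-objective}. The penalty term changes by $-\sum_j t d_jF_{\max,j}=-tF_{\max}+t\sum_kF_{\max,k}=t\kappa$, which gives $D(\boldsymbol\lambda+t\bm d)=D(\boldsymbol\lambda)+t\kappa$.

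For part~3, fix a dual maximizer $\boldsymbol\lambda^*$. The dual is bounded above by $C^*<\infty$ through weak duality, and Theorem~\ref{thm:mixture} gives attainment; this does not matter here, since we only argue about a given maximizer. Suppose $\kappa>0$ and $\min_k\lambda^*_k>0$. Take $t=\min_k\lambda^*_k>0$. Then $\boldsymbol\lambda^*+t\bm d$ is still nonnegative, because $\lambda_0^*+t>0$ and $\lambda_k^*-t\ge0$. By part~2 it has dual value $D(\boldsymbol\lambda^*)+t\kappa>D(\boldsymbol\lambda^*)$, which contradicts maximality. Symmetrically, if $\kappa<0$ and $\lambda_0^*>0$, take $t=-\lambda_0^*<0$. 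The point $\boldsymbol\lambda^*+t\bm d$ has zeroth coordinate $0$ and $k$-th coordinates $\lambda_k^*+\lambda_0^*\ge0$, and its value increases by $t\kappa=\lambda_0^*|\kappa|>0$, again a contradiction.

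There is no real obstacle. The only care needed is in part~3: choose the step $t$ so the perturbed point stays in $\mathbb{R}_+^{K+1}$ (which is exactly why the coordinate driven to zero lies on the boundary), and note that the statement is vacuous if no maximizer exists. In the case $\kappa=0$ the dual is constant along $\bm d$, so one can always move an arbitrary maximizer to a boundary maximizer. That extends ``every'' only in the weak sense, and I would state it as a remark rather than claim it.
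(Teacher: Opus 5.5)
Your proof is correct and follows the paper's argument step for step: substitute $f_0=\sum_k f_k$ into the Lagrangian cost so that the shift along $\bm d$ leaves every $\mu_k$, and hence the inner MDP, unchanged; read off the $t\kappa$ change in the penalty term of $D$; and rule out an interior maximizer by stepping $\min_k\lambda^*_k\,\bm d$ when $\kappa>0$ or $-\lambda^*_0\bm d$ when $\kappa<0$. Your $\kappa=0$ caveat is apt, since interior maximizers can then exist and only the existence of a boundary maximizer survives; your aside on attainment is unnecessary, and dual attainment would in any case follow from Lemma~\ref{lem:lambda-bound}, not Theorem~\ref{thm:mixture}.
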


\begin{proof}[Proof sketch]
(1) Substitute $f_0(a)=\sum_kf_k(a)$  into~\eqref{eq:lagcost}; the shift $\boldsymbol\lambda\mapsto\boldsymbol\lambda+ t\bm d$ maps $\mu_k\mapsto(\lambda_0+t)+(\lambda_k-t)=\mu_k$, so the inner MDP is unchanged. (2) Combine (1) with~\eqref{eq:dual-objective}. (3) If $\kappa>0$ and $\tau:=\min_k\lambda^*_k>0$, then $\boldsymbol\lambda^*+\tau\bm d\ge0$ has a strictly larger dual value by (2), a contradiction; the case $\kappa<0$ is symmetric with the step $-\lambda^*_0\bm d$.
\end{proof}

About part~(3), when $\kappa<0$ the per-sensor budgets already imply the global budget, so the global multiplier is zero at optimality. When $\kappa>0$ at least one sensor budget is not binding. Removing this redundant direction costs no Bellman solve, but it does not make a coordinatewise search globally optimal, because $D$ is not differentiable in general.

\subsection{Computing the optimum}\label{sec:dsg}

Theorem~\ref{thm:pwlc} makes the dual a concave piecewise-linear maximization. To make it a maximization over a compact set, we bound the multipliers using the strict-feasibility margin of Lemma~\ref{lem:slater}.

\begin{lemma}\label{lem:lambda-bound}
Every dual maximizer satisfies
\begin{equation}\label{eq:lambda-bound}
\sum_{j=0}^{K}\lambda_j^*\ \le\
\frac{\bar C(\pi^{\mathrm{SA}})-C^*}{\sigma}\ \le\
\frac{\bar C(\pi^{\mathrm{SA}})}{\sigma}=:\Lambda_{\max},
\end{equation}
so the dual can be restricted to $\Lambda:=\{\boldsymbol\lambda\in\mathbb{R}_+^{K+1}: \|\boldsymbol\lambda\|_1\le\Lambda_{\max}\}$ without loss of optimality.
\end{lemma}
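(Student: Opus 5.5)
We use the standard Slater argument, combining weak duality, strong duality, and the strictly feasible policy $\pi^{\mathrm{SA}}$ of Lemma~\ref{lem:slater}.

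\emph{Step 1: strong duality.} By Theorem~\ref{thm:mixture}, \eqref{eq:CMDP} is equivalent to the finite linear program $\min_{x\in\mathcal{X}}\langle c,x\rangle$ subject to $\langle f_j,x\rangle\le F_{\max,j}$. This program is feasible and bounded, so LP duality holds with zero gap. Dualizing only the $K{+}1$ budget constraints and keeping the polytope $\mathcal{X}$ gives the partial dual $\max_{\boldsymbol\lambda\ge0}\bigl\{\min_{x\in\mathcal{X}}\langle \ell^{\boldsymbol\lambda},x\rangle-\sum_j\lambda_jF_{\max,j}\bigr\}$. The inner minimum equals $L^*(\boldsymbol\lambda)$ by Theorem~\ref{thm:pwlc}, because it is a minimum over extreme points, and these correspond to deterministic policy--recurrent-class pairs, which lie in the envelope~\eqref{eq:pwlc-envelope} by the communicating property. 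Hence $\max_{\boldsymbol\lambda\ge0}D(\boldsymbol\lambda)=C^*$, and the maximum is attained since $D$ is concave piecewise linear and bounded above.

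\emph{Step 2: the Slater bound.} Let $\boldsymbol\lambda^*$ be any dual maximizer. Since $\pi^{\mathrm{SA}}$ is stationary and its chain is irreducible (Lemma~\ref{lem:slater}), its averages do not depend on the initial state. Then
\[
C^*=D(\boldsymbol\lambda^*)\le \bar C(\pi^{\mathrm{SA}})+\sum_{j}\lambda_j^*\bigl(\bar F_j(\pi^{\mathrm{SA}})-F_{\max,j}\bigr)\le \bar C(\pi^{\mathrm{SA}})-\sigma\sum_j\lambda_j^*,
\]
using $L^*(\boldsymbol\lambda^*)\le\bar L^{\boldsymbol\lambda^*}(\pi^{\mathrm{SA}})$, the margin $\bar F_j(\pi^{\mathrm{SA}})\le F_{\max,j}-\sigma$, and $\lambda_j^*\ge0$. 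Rearranging gives the first inequality in~\eqref{eq:lambda-bound}. The second follows from $C^*\ge0$, which holds because $\delta^m\ge0$ and $\omega_m\ge0$.

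\emph{Step 3: restriction to $\Lambda$.} Every maximizer lies in $\Lambda$ and at least one maximizer exists, so $\max_{\Lambda}D=\max_{\mathbb{R}_+^{K+1}}D$.

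The main obstacle is Step 1: strong duality together with the existence of a maximizer. The point needing care is that $L^*$, defined as an infimum over history-dependent policies, coincides with the LP inner minimum over the multichain occupation polytope. I would settle this through Theorem~\ref{thm:pwlc} and the extreme-point characterization cited in the proof of Theorem~\ref{thm:mixture}. If that identification is delicate, a fallback is Lagrangian duality for the convex program over $\mathcal{X}$ with Slater's condition supplied by the occupation measure of $\pi^{\mathrm{SA}}$. That version directly gives both zero gap and compactness of the set of dual maximizers.
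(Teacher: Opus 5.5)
Your proof is correct and follows the same route as the paper: you evaluate the inner infimum at the strictly feasible $\pi^{\mathrm{SA}}$ to get $D(\boldsymbol\lambda)\le\bar C(\pi^{\mathrm{SA}})-\sigma\|\boldsymbol\lambda\|_1$, then combine strong duality $D(\boldsymbol\lambda^*)=C^*$ with $C^*\ge0$. The only difference is that you derive strong duality and dual attainment yourself from the occupation-measure LP of Theorem~\ref{thm:mixture}, whereas the paper simply cites the finite-CMDP strong duality result.
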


\begin{proof}[Proof sketch]
Evaluating the inner infimum at $\pi^{\mathrm{SA}}$ and using $\bar F_j(\pi^{\mathrm{SA}})\le F_{\max,j}-\sigma$ gives $D(\boldsymbol\lambda)\le\bar C(\pi^{\mathrm{SA}})-\sigma\|\boldsymbol\lambda\|_1$ for all $\boldsymbol\lambda\ge0$. Strong duality holds for the finite CMDP under strict feasibility~\cite{HordijkKallenberg1984,Altman1999CMDP}, so $D(\boldsymbol\lambda^*)=C^*$; substituting gives~\eqref{eq:lambda-bound}.
\end{proof}

At an iterate $\boldsymbol\lambda^n$, solve~\eqref{eq:bellman} to get $\pi^{\boldsymbol\lambda^n}\in\Pi_{MD}$ and a recurrent class $R^n$; the vector $\bm g^n=(\bar F_j(\pi^{\boldsymbol\lambda^n};R^n)-F_{\max,j})_{j=0}^{K}$ is a supergradient of $D$ by~\eqref{eq:superdiff}, and $\|\bm g^n\|_\infty\le1$. The projected step is $\boldsymbol\lambda^{n+1}=\Pi_\Lambda(\boldsymbol\lambda^n +\eta\bm g^n)$.

\begin{proposition}\label{prop:dsg}
With $\boldsymbol\lambda^0=\bm0$ and $\eta=\Lambda_{\max}/\sqrt{(K+1)(N+1)}$, the iterates satisfy
\begin{equation}\label{eq:dsg-rate}
\max_{0\le n\le N}D(\boldsymbol\lambda^n)\ \ge\
D^*-\frac{\Lambda_{\max}\sqrt{K+1}}{\sqrt{N+1}} .
\end{equation}
The running average $\bar{\boldsymbol\lambda}^N$ obeys the same bound.
\end{proposition}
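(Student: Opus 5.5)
This is the standard projected supergradient ascent analysis for a concave function over a compact convex set, and I will apply it to $D$ on $\Lambda$. By Theorem~\ref{thm:pwlc}, $D$ is concave on $\mathbb{R}_+^{K+1}$, and the vector $\bm g^n$ is a supergradient of $D$ at $\boldsymbol\lambda^n$. By Lemma~\ref{lem:lambda-bound}, some maximizer $\boldsymbol\lambda^*$ lies in $\Lambda$, so $D^*=D(\boldsymbol\lambda^*)$ with $\boldsymbol\lambda^*\in\Lambda$. The set $\Lambda$ is closed and convex, so the Euclidean projection $\Pi_\Lambda$ is nonexpansive.

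The key steps are as follows.

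\emph{One-step inequality.} Nonexpansiveness of the projection and $\boldsymbol\lambda^*\in\Lambda$ give
\[
\|\boldsymbol\lambda^{n+1}-\boldsymbol\lambda^*\|_2^2\le\|\boldsymbol\lambda^{n}-\boldsymbol\lambda^*\|_2^2+2\eta\langle\bm g^n,\boldsymbol\lambda^n-\boldsymbol\lambda^*\rangle+\eta^2\|\bm g^n\|_2^2 .
\]
The supergradient inequality gives $\langle\bm g^n,\boldsymbol\lambda^*-\boldsymbol\lambda^n\rangle\ge D^*-D(\boldsymbol\lambda^n)$. Combining the two yields
\[
2\eta\bigl(D^*-D(\boldsymbol\lambda^n)\bigr)\le\|\boldsymbol\lambda^{n}-\boldsymbol\lambda^*\|_2^2-\|\boldsymbol\lambda^{n+1}-\boldsymbol\lambda^*\|_2^2+\eta^2\|\bm g^n\|_2^2 .
\]

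\emph{Telescoping.} Summing over $n=0,\dots,N$ bounds $\sum_{n=0}^N\bigl(D^*-D(\boldsymbol\lambda^n)\bigr)$ by
\[
\frac{\|\boldsymbol\lambda^*\|_2^2+\eta^2\sum_n\|\bm g^n\|_2^2}{2\eta}.
\]
Since $\|\bm g^n\|_\infty\le1$, we have $\|\bm g^n\|_2^2\le K+1$. Since $\boldsymbol\lambda^0=\bm 0$, we have $\|\boldsymbol\lambda^*\|_2^2\le\|\boldsymbol\lambda^*\|_1^2\le\Lambda_{\max}^2$. Dividing by $N+1$ and bounding the minimum gap by the average gap gives
\[
D^*-\max_n D(\boldsymbol\lambda^n)\le\frac{\Lambda_{\max}^2}{2\eta(N+1)}+\frac{\eta(K+1)}{2}.
\]

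\emph{Step size.} Substituting $\eta=\Lambda_{\max}/\sqrt{(K+1)(N+1)}$ makes each of the two terms equal to $\Lambda_{\max}\sqrt{K+1}/(2\sqrt{N+1})$. Their sum is the claimed bound~\eqref{eq:dsg-rate}.

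\emph{Running average.} Concavity of $D$ and Jensen's inequality give $D(\bar{\boldsymbol\lambda}^N)\ge\frac{1}{N+1}\sum_n D(\boldsymbol\lambda^n)$. The telescoped inequality bounds exactly this average gap, so the running average obeys the same bound. Here $\bar{\boldsymbol\lambda}^N$ is the average of $\boldsymbol\lambda^0,\dots,\boldsymbol\lambda^N$; it lies in $\Lambda$ by convexity, so $D(\bar{\boldsymbol\lambda}^N)\le D^*$.

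The only point needing care is that $\bm g^n$ is a genuine supergradient. This requires $\pi^{\boldsymbol\lambda^n}$ to be active at $\boldsymbol\lambda^n$ in the sense of~\eqref{eq:superdiff}, with its frequencies evaluated on the recurrent class $R^n$. Inner-optimal policies have constant optimal gain $L^*(\boldsymbol\lambda^n)$ from every state by Theorem~\ref{thm:bellman}. Hence the affine function $\boldsymbol\lambda\mapsto\bar C(\pi;R^n)+\sum_j\lambda_j\bar F_j(\pi;R^n)$ upper-bounds $L^*$, because it is the anchored Lagrangian value of a fixed deterministic policy started in $R^n$. This function also touches $L^*$ at $\boldsymbol\lambda^n$. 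Subtracting $\sum_j\lambda_jF_{\max,j}$ then gives the supergradient inequality directly, without invoking Danskin's theorem.

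The boundary of $\Lambda$ needs no special treatment, because the inequality is only used at points of $\Lambda\subseteq\mathbb{R}_+^{K+1}$.
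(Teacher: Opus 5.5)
Your proposal is correct and follows the same route as the paper. Both use the standard projected supergradient analysis on $\Lambda$, with nonexpansiveness of $\Pi_\Lambda$, $\|\bm g^n\|_2\le\sqrt{K+1}$, $\|\boldsymbol\lambda^0-\boldsymbol\lambda^*\|_2\le\Lambda_{\max}$ from Lemma~\ref{lem:lambda-bound}, a step size that balances the two terms, and Jensen's inequality for the running average. Your direct check that $\bm g^n$ is a supergradient is sound: the recurrent-class averages of $\pi^{\boldsymbol\lambda^n}$ give an affine majorant of $L^*$ that touches it at $\boldsymbol\lambda^n$. This fills in the step the paper delegates to~\eqref{eq:superdiff}.
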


\begin{proof}[Proof sketch]
Standard projected subgradient analysis~\cite[Sec.~3.2.3]{Nesterov2018Lectures} on the compact convex set $\Lambda$ (Lemma~\ref{lem:lambda-bound}), using non-expansiveness of the projection, $\|\bm g^n\|_2\le\sqrt{K+1}$, and $\|\boldsymbol\lambda^0-\boldsymbol\lambda^*\|_2 \le\Lambda_{\max}$.
\end{proof}

Two remarks on implementation. The inner MDP is communicating but need not be unichain, since deterministic policies can freeze estimates, so plain relative value iteration is not guaranteed to converge; the inner solve should use multichain policy iteration or the average-cost linear program~\cite[Ch.~9]{Puterman1994}. A primal solution is recovered by solving the program of Theorem~\ref{thm:mixture} restricted to the columns $\{(\pi^{\boldsymbol\lambda^n},R^n)\}_{n\le N}$. With $\bar W$ sweeps per inner solve, $N$ outer iterations cost $O(N\bar W|\mathcal{S}|^2|\mathcal{E}|)$.

\section{Numerical Results}\label{sec:numerical}
\subsection{Setup}\label{sec:num-setup}

We consider a system with $M=3$ sources monitored by $K=2$ sensors sharing the TDMA uplink. Sensor 1 is reliable but delivers one slot late, $(p^1_s,d_1)=(0.9,1)$, and covers all three sources; sensor 2 is faster but less reliable, $(p^2_s,d_2)=(0.55,0)$, and covers sources~2 and~3 only, thus source~1 is reachable through a single edge. In each slot the scheduler activates one of them or remains idle.

Each source evolves as a three-state Markov chain, for simplicity we write $Q(a)$ for the chain with self-transition probability $a$ and uniform off-diagonal probabilities, which is slowly evolving as described in Section~\ref{sec:system} exactly when $a>1/2$. We take $Q^1 = Q(0.9)$, $Q^2 = Q(0.2)$ and $Q^3 = Q(0.8)$. Sources~1 and~2 share the asymmetric CAE matrix $\delta =
\begin{bmatrix}
0 & 10 & 20\\
20 & 0 & 10\\
30 & 20 & 0
\end{bmatrix}$. Source 3 carries a uniform mismatch cost of $15$. We set $\omega_m=1$ without loss of generality. The budgets are $F_{\max}=0.5$ and $(F_{\max,1},F_{\max,2})=(0.35,0.30)$.

\begin{figure}[htp]
\centering
\includegraphics[width=0.7\columnwidth]{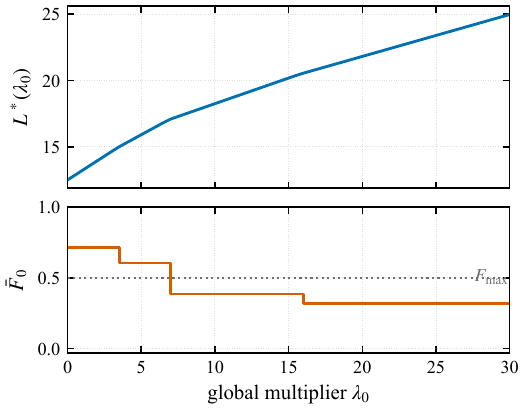}
\caption{Inner Lagrangian value and the corresponding transmission frequency.}
\label{fig:pwlc}
\end{figure}

\subsection{Piecewise-linear structure and randomization}\label{sec:num-pwlc}

Fig.~\ref{fig:pwlc} illustrates the structural properties established in Theorem~\ref{thm:pwlc}, $L^*$ is non-decreasing, concave, and piecewise linear up to numerical tolerance, and every slope of the envelope coincides with a plateau of the transmission frequency, which is the supergradient in~\eqref{eq:superdiff}.

The budget $F_{\max}=0.5$ falls strictly between two frequency plateaus. No deterministic inner policy is therefore both optimal and budget-tight, and the constrained optimum must randomize between deterministic policies, which is the case covered by Theorem~\ref{thm:mixture}. Projected dual subgradient ascent reaches a duality gap of $3.7\%$ after $40$ outer iterations and $0.05\%$ after $200$, consistent with Proposition~\ref{prop:dsg}.

\begin{figure}[htp]
\centering
\includegraphics[width=0.7\columnwidth]{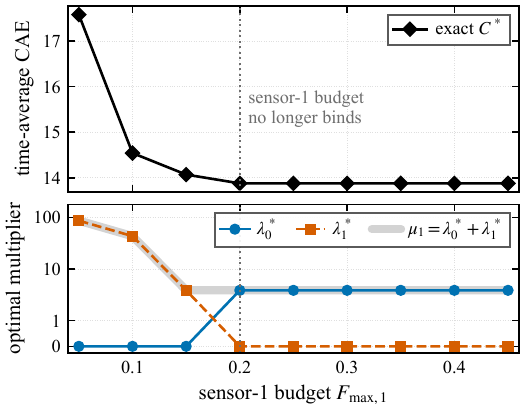}
\caption{Results of the sensor-1 budget $F_{\max,1}$.}
\label{fig:caps}
\end{figure}

\subsection{Effective transmission costs under a per-sensor budget}\label{sec:num-caps}

Fig.~\ref{fig:caps} shows how the optimal average cost and optimal Lagrange multipliers vary with the sensor-1 transmission budget $F_{\max,1}$. The exact optimum decreases until the budget reaches the level a budget-free optimum would use, $F_{\max}-F_{\max,2}=0.2$, and is flat beyond it. At that point the multipliers change roles as Proposition~\ref{prop:effective-price} requires: $\lambda^*_1$ drops from $3.86$ to $0$ while $\lambda^*_0$ rises from $0$ to $3.86$, so $\mu_1=\lambda^*_0+\lambda^*_1$ is continuous. The sign of $\kappa$ predicts which multiplier vanishes: $\lambda^*_0=0$ while $\kappa<0$ and $\lambda^*_1=0$ once $\kappa\ge0$. Only $\mu_1$ is determined by the inner problem; the split between $\lambda_0$ and $\lambda_1$ is fixed by which constraints bind.

\section{Conclusion}\label{sec:conclusion}

We studied semantic-aware remote estimation with overlapping sensor coverage under global and per-sensor transmission constraints. We established that an optimal policy can be implemented by randomizing once among at most $K+1$ stationary deterministic policies and characterized the
piecewise-linear concave Lagrangian value. Numerical results confirmed these structural properties and illustrated the performance of the projected dual subgradient method. Future work will focus on scalable scheduling for larger systems.

\bibliographystyle{ACM-Reference-Format-SEQ}
\bibliography{references}

\end{document}